\newtheorem{theorem}{Theorem}
\newtheorem{lemma}{Lemma}
\newtheorem{definition}{Definition}
\newenvironment{proof}[1][Proof]
    {\begin{trivlist}\item[\hskip \labelsep \textit{#1.}]}
    {$\square$ \end{trivlist}}
\newcommand{\R}{\mathbb{R}}
\renewcommand{\H}{\mathcal{H}}
\begin{document}
\title{Domains of time-dependent density-potential mappings}
\author{M Penz$^1$ and M Ruggenthaler$^2$}
\address{$^1$ Institute of Theoretical Physics, University of Innsbruck, Austria}
\address{$^2$ Nanoscience Center, University of Jyv\"askyl\"a, Finland}
\ead{markus.penz@uibk.ac.at}

\begin{abstract}
The key element in time-dependent density functional theory is the one-to-one correspondence between the one-particle density and the external potential. In most approaches this mapping is transformed into a certain type of Sturm-Liouville problem. Here we give conditions for existence and uniqueness of solutions and construct the weighted Sobolev space they lie in. As a result the class of $v$-representable densities is considerably widened with respect to previous work.
\end{abstract}

\pacs{31.15.ee, 02.30.Sa}
\submitto{\JPA}

\section{Introduction}

Imagine an interacting many-particle quantum system on a bounded space domain $\Omega \subset \R^3$ within a time interval $[0,T]$ governed by Schrödinger evolution from a fixed initial state $\Psi_0 \in L^2(\Omega^{N})$, where $N$ is the number of particles. Using straightforward computational techniques one soon runs into complexity issues because of the multidimensional structure of the problem. One strategy to bypass this is given by time-dependent density functional theory (TDDFT). The key ingredient is to switch from the wave-function $\Psi$ as a fundamental functional variable for evaluating observables to the one-particle density $n : [0,T] \times \Omega \rightarrow \R_{\geq 0}$. That there exists indeed a one-to-one relation $\Psi \leftrightarrow n$ was first shown by Runge and Gross \cite{runge-gross} for the time-dependent case but without considerations towards the domains of the involved mappings.\\

In order to calculate the one-particle density $n$ without the full complexity of the interacting Schrödinger problem one can rely on two possible schemes. Firstly, quantum fluid dynamical approximations like Thomas-Fermi theory and secondly, the usually adopted Kohn-Sham scheme. \cite{kohn-sham} Therein one substitutes the interacting quantum system by a non-interacting system in an external effective scalar potential $v : [0,T] \times \Omega \rightarrow \R$ leading to exactly the same one-particle density. This amounts to the fundamental question of (non-interacting) $v$-representability. In order to guarantee the existence of such an effective potential one uses the divergence of the local force equation \cite{van-leeuwen}
\begin{equation}
\label{sl-nonlinear}
 -\nabla (n \nabla v) = q[v] - \partial_t^2 n.
\end{equation}

Here the term $q[v] : [0,T] \times \Omega \rightarrow \R$ gives the divergence of the internal local forces, i.e. kinetic terms and all interactions. It is defined as an expectation value $\langle \Psi[v], \hat{q}\, \Psi[v] \rangle$ and thereby involves the state $\Psi[v]$ evolved under the influence of the external potential $v$. The dependence of $q[v]$ on $v$ is therefore highly non-linear.\\

To solve equation (\ref{sl-nonlinear}) one will linearize its right hand side, e.g. by assuming analyticity in time of $n$ and $v$ hence finding a set of coupled linear partial differential equations. \cite{van-leeuwen} A different approach currently under consideration by the authors tries to eliminate the analyticity restriction by pursuing a fixed point scheme to solve (\ref{sl-nonlinear}). \cite{tddft2} In any case one arrives at linear partial differential equations of Sturm-Liouville type for any time $t \in [0,T]$
\begin{equation}
\label{sl-general}
 -\nabla (n \nabla v) = \zeta.
\end{equation}

The time-dependent generalization of density functional theory is nowadays a widely used technique in different fields of physics. \cite{tddft-book} But in contrast to the ground-state theory there have not been a lot of rigorous mathematical investigations of TDDFT. As pointed out above the linear Sturm-Liouville equation is of fundamental importance to the foundations of the theory. Therefore it is crucial to have rigorous results concerning the properties of (\ref{sl-general}) in order to build a sound mathematical  basis for TDDFT. So far uniqueness and existence of solutions $v$ to (\ref{sl-general}) was shown in \cite{tddft1} but under the condition that the density is not only bounded but gapped away from zero for the whole domain $\Omega$. In this work the authors considerably widen the class of allowed densities and give exact domains for the involved external potentials $v$ as well as the inhomogeneity $\zeta$. Further, we discuss implications on the involved external potentials and the one-particle densities within the different approaches \cite{van-leeuwen, tddft2} to solve the non-linear equation (\ref{sl-nonlinear}).

\section{A problem-adapted weighted Sobolev space of potentials}

In the search for solutions to (\ref{sl-general}) we consider weak solutions, which are defined by adjoining an arbitrary $u$ by means of the standard $L^2$ scalar product in all space-variables.
\begin{equation}
-\langle u, \nabla (n \nabla v) \rangle = \langle u,\zeta \rangle
\end{equation}

If we consider only potentials vanishing at the border of $\Omega$ partial integration defines a bilinear form $Q$ by
\begin{equation}
\label{Q-def}
Q(u,v) = \langle \nabla u, n \nabla v \rangle = \langle u,\zeta \rangle.
\end{equation}

The questions of existence and uniqueness of a solution $v$ to (\ref{Q-def}) can now be answered by the theorem of Lax-Milgram. \cite{blanchard-bruening}

\begin{theorem}
\label{lax-milgram} Let $Q$ be a coercive continuous bilinear form on
a Hilbert space $\H$. Then for every continuous linear
functional $\zeta$ on $\H$, there exists a unique $v \in
\H$ such that $Q(u,v) = \zeta(u)$ holds for all $u \in \H$. 
\end{theorem}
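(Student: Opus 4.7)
The plan is to recast the bilinear form as a bounded linear operator on $\H$ via the Riesz representation theorem and then exploit coercivity to invert it. First, for each fixed $v \in \H$ the map $u \mapsto Q(u,v)$ is a continuous linear functional on $\H$, so Riesz supplies a unique element $Av \in \H$ such that $Q(u,v) = \langle u, Av \rangle$ for all $u \in \H$. Bilinearity of $Q$ in the second slot together with uniqueness of the Riesz vector make $A \colon \H \to \H$ linear, and the continuity estimate $|Q(u,v)| \le C\|u\|\|v\|$ yields $\|Av\| \le C\|v\|$, so $A$ is bounded.

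The central step is to show that $A$ is a bijection. Coercivity gives a constant $c>0$ with $c\|v\|^2 \le Q(v,v) = \langle v, Av\rangle \le \|v\|\|Av\|$, hence $\|Av\| \ge c\|v\|$. This bound immediately implies injectivity of $A$; it also implies that the range of $A$ is closed, since any Cauchy sequence $(Av_n)$ pulls back through the reverse estimate to a Cauchy sequence $(v_n)$, which converges by completeness of $\H$, and then $A v_n$ converges to the image of the limit by continuity of $A$. To promote closed range to full range it suffices to show density: if $w \in \H$ is orthogonal to $\mathrm{Range}(A)$, testing against $v=w$ gives $\langle w, Aw\rangle = Q(w,w) = 0$, and coercivity forces $w=0$.

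Finally, apply Riesz once more to the continuous linear functional $\zeta$ to obtain $z \in \H$ with $\zeta(u) = \langle u, z\rangle$. Surjectivity of $A$ produces some $v \in \H$ with $Av = z$, which then satisfies $Q(u,v) = \langle u, Av\rangle = \zeta(u)$ for every $u \in \H$. Uniqueness follows directly from the coercivity estimate: if $v_1,v_2$ are two solutions, then $w = v_1-v_2$ satisfies $Q(u,w)=0$ for all $u$, in particular for $u=w$, so $c\|w\|^2 \le Q(w,w) = 0$ and $w=0$.

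The only step that requires real argument is the density of $\mathrm{Range}(A)$; once the two Riesz identifications are in place, continuity of $Q$ supplies boundedness of $A$ and coercivity supplies the lower bound that gives both injectivity and closed range. I expect no subtlety beyond keeping the roles of continuity and coercivity cleanly separated.
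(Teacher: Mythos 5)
Your proof is correct and complete: the Riesz-representation construction of the operator $A$, the coercivity lower bound $\|Av\|\geq c\|v\|$ giving injectivity and closed range, the orthogonality argument for density, and the final uniqueness step are all sound (note the paper works with a real Hilbert space, so no conjugate-linearity issues arise). The paper itself offers no proof of this theorem---it is quoted as a standard result with a citation to Blanchard and Br\"uning---and your argument is precisely the standard textbook proof one would find there, so there is nothing to contrast.
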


A bilinear form $Q$ is said to be coercive if there exists a
constant $c > 0$ such that $Q(u,u) \geq c \|u\|_\H^2$ for all $u \in
\H$. Continuity means we find a $C >0$ such that $Q(u,v) \leq C \|u\|_\H \cdot \|v\|_\H$ for all $u,v \in \H$. $\H$ as a Hilbert space should now be chosen in a way that makes $Q$ coercive as well as continuous. The use of the $L^2$ scalar product from before suggests $\H \subset L^2(\Omega)$. Observing that $\langle\cdot,\cdot\rangle$ and $Q$ can be naturally combined to a bilinear form
\begin{equation}
\label{weighted-sp}
\langle u,v \rangle_\H = \langle u,v \rangle + Q(u,v) = \langle u,v \rangle + \langle \nabla u, n \nabla v \rangle
\end{equation}
we can ask if this yields an adequate scalar product to define our Hilbert space $\H$, i.e. we have to check if
\begin{equation}
\|u\|_\H = \sqrt{\langle u,u \rangle_\H} = \sqrt{\langle u,u \rangle + \langle \nabla u, n \nabla u \rangle}
\end{equation}
is actually a norm. One obvious restriction is $n \geq 0$ but this is certainly true for a one-particle density. The second restriction is such to make sense of the notation ``$\nabla u$'' if only in a distributional sense. Because of $\Omega$ bounded it holds $L^2(\Omega) \subset L^1(\Omega)$ and as the norm of $\H$ is constructed in such a way that $\|u\|_2 \leq \|u\|_\H$ we have the supposed property $\H \subset L^2(\Omega)$. Remembering that elements of $L^1_{\mathrm{loc}}$ can naturally be identified with distributions we are led to the following chain of inclusions allowing elements of $\H$ to be distributionally differentiated.
\begin{equation}
\H \subset L^2(\Omega) \subset L^1(\Omega) \subset L^1_{\mathrm{loc}}(\Omega) \subset \mathcal{D}'(\Omega)
\end{equation}

If $n=1$ then $\|\cdot\|_\H$ is just the norm of the Sobolev space $W^{1,2}(\Omega) = H^1(\Omega)$ and written as $\|\cdot\|_{1,2}$ so we adopt the notation $H^1(\Omega,n)$ for the complete normed space equipped with $\|\cdot\|_\H = \|\cdot\|_{1,2,n}$ and call it a ``weighted Sobolev space''. \cite{adams, kufner-opic}\\

Let us get our notation of the different norms involved straight by defining them all for general $p \in [1,\infty)$. Note the use of the weighting function $n$ only in the second term of the definition of $\|\cdot\|_{1,p,n}$.
\begin{eqnarray}
\|u\|_p &=& \left( \int_\Omega |u|^p \rmd x \right)^{\frac{1}{p}} \\
\|u\|_{1,p} &=& \left( \|u\|_p^p + \left\|\nabla u\right\|_p^p \right)^{\frac{1}{p}} \\
\|u\|_{p,n} &=& \left\|u \, n^{\frac{1}{p}}\right\|_p = \left( \int_\Omega |u|^p \,n \rmd x \right)^{\frac{1}{p}} \\
\|u\|_{1,p,n} &=& \left( \|u\|_p^p + \left\|\nabla u\right\|_{p,n}^p \right)^{\frac{1}{p}}
\end{eqnarray}

We still must not forget the restriction to functions which vanish at the border of $\Omega$ in order to justify the integration by parts used to derive (\ref{Q-def}). This is of course not generally true for elements of $H^1(\Omega,n)$ but can be met if one adopts the usual definition of $H_0^1(\Omega)$ to weighted Sobolev spaces. In that we take the space of infinitely differentiable functions on $\Omega$ with compact support $\mathcal{C}^\infty_0(\Omega)$ and form the closure under our weighted Sobolev norm $\|\cdot\|_{1,2,n}$. The resulting space $\H = H_0^1(\Omega,n)$ equipped with scalar product (\ref{weighted-sp}) is complete and thus a full-fledged Hilbert space of functions which vanish at the border of $\Omega$. This will be the main space of our further investigations.\\

Let us also define the dual of this Hilbert space $H^{-1}(\Omega,n) = (H_0^1(\Omega,n))'$, i.e. the space of linear continuous functionals on $H_0^1(\Omega,n)$. In this we follow standard notation, cf. \cite{adams} 3.12 and 3.13. $\zeta$ from Theorem \ref{lax-milgram} is thought of being an element of this space but we will rather concentrate on regular distributions as we try to find weak solutions to (\ref{sl-general}).

\section{Embedding theorems}

To prove coercivity of $Q$ we follow a strategy largely outlined in \cite{drabek}. The idea is to continuously embed the weighted Sobolev space into a non-weighted one and further into an ordinary $L^p$ space.

\begin{definition}
Let $V, W$ be Banach spaces with $V \subset W$. We say that $V$ is continuously embedded in $W$ and write $V \hookrightarrow W$, if there is a constant $c \geq 0$ such that for all $v \in V$
\begin{equation}
\|v\|_W \leq c \, \|v\|_V.
\end{equation}
We say that $V$ is compactly embedded in $W$ and write $V \hookrightarrow\hookrightarrow W$, if additionally every bounded sequence in $V$ has a subsequence converging in $W$.
\end{definition}

\begin{lemma}\label{embedding1} Let $\Omega$ be bounded, $p > q \geq 1$ and the weighting function $n$ such that $n^{-s} \in L^1(\Omega)$ for $s = \frac{q}{p-q}$, then
\begin{equation}
W_0^{1,p}(\Omega, n) \hookrightarrow W_0^{1,q}(\Omega).
\end{equation}
\end{lemma}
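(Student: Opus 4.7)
The plan is to prove the embedding by first establishing the defining inequality $\|u\|_{1,q} \leq c\,\|u\|_{1,p,n}$ on the dense subspace $\mathcal{C}_0^\infty(\Omega)$ and then extending by continuity to the full $W_0^{1,p}(\Omega,n)$. Because both spaces are defined as the $\mathcal{C}_0^\infty(\Omega)$-closure under their respective norms, this two-step strategy automatically produces an element of $W_0^{1,q}(\Omega)$; uniqueness of the limiting distribution (via the chain $\H\subset L^2\subset L^1_{\mathrm{loc}}\subset\mathcal{D}'$ in the excerpt, applied at exponents $p,q$) identifies the limit with the original $u$, so the set-theoretic inclusion $W_0^{1,p}(\Omega,n)\subset W_0^{1,q}(\Omega)$ comes for free once the norm estimate is in place.

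For the norm estimate on a smooth $\phi$ I would split into an $L^q$ part and a gradient part. The $L^q$ part is routine: since $\Omega$ is bounded and $p/q>1$, Hölder gives $\|\phi\|_q\leq |\Omega|^{(p-q)/(pq)}\|\phi\|_p$. The heart of the argument is the gradient bound. I write
\begin{equation*}
|\nabla\phi|^{q}= \bigl(|\nabla\phi|^{q} n^{q/p}\bigr)\cdot n^{-q/p}
\end{equation*}
and apply Hölder with conjugate exponents $p/q$ and $p/(p-q)$, which turns the first factor into $(\int |\nabla\phi|^{p}n\,\mathrm{d}x)^{q/p}$ and the second into $\bigl(\int n^{-q/(p-q)}\,\mathrm{d}x\bigr)^{(p-q)/p}=\|n^{-s}\|_1^{(p-q)/p}$ for precisely $s=q/(p-q)$, which is the hypothesis of the lemma. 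Hence $\|\nabla\phi\|_q\leq \|n^{-s}\|_1^{(p-q)/(pq)}\|\nabla\phi\|_{p,n}$.

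Combining both pieces requires one more elementary step, because the ambient norms use $q$-th powers on the left and $p$-th powers on the right. A discrete Hölder inequality gives $(a^q+b^q)^{1/q}\leq 2^{(p-q)/(pq)}(a^p+b^p)^{1/p}$ for $a,b\geq 0$. Inserting the two estimates above with $a=\|\phi\|_p$ and $b=\|\nabla\phi\|_{p,n}$ and absorbing the two Hölder constants into a single one yields
\begin{equation*}
\|\phi\|_{1,q}\leq c\,\|\phi\|_{1,p,n},\qquad c=2^{(p-q)/(pq)}\max\bigl(|\Omega|,\|n^{-s}\|_1\bigr)^{(p-q)/(pq)},
\end{equation*}
which is the desired embedding inequality.

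The main obstacle, modest as it is, lies in the gradient step: one has to guess the right way to factor $|\nabla\phi|^q$ so that the weight $n$ reconstitutes itself in the $W_0^{1,p}(\Omega,n)$-seminorm while the leftover $n^{-s}$ lands in $L^1$ with exactly the exponent $s=q/(p-q)$ dictated by the conjugate pair. Everything else — Hölder on a bounded domain, the discrete $p$-vs-$q$ comparison, and passage to the closure — is automatic. As a by-product one obtains an explicit embedding constant, which will be useful later when feeding this lemma into a coercivity argument for the Lax–Milgram theorem.
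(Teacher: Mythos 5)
Your proof is correct and follows essentially the same route as the paper: the central step, factoring $|\nabla u|^q = \bigl(|\nabla u|^q n^{q/p}\bigr) n^{-q/p}$ and applying H\"older with conjugate exponents $p/q$ and $p/(p-q)$ so that the hypothesis $n^{-s}\in L^1(\Omega)$ with $s=q/(p-q)$ appears, is exactly the paper's argument for (\ref{inequ-q-p,n}). You merely make explicit two points the paper leaves as ``easily established'' --- the $L^p\subset L^q$ bound on the zeroth-order term and the discrete $\ell^q$-versus-$\ell^p$ comparison needed to assemble the full norms, plus the density/closure step --- which is fine but not a different proof.
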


\begin{proof}
Using Hölder's inequality with $\frac{1}{p'} + \frac{1}{q'} = \frac{q}{p} + \frac{p-q}{p} = 1$ we derive
\begin{eqnarray}
\left\| \nabla u \right\|_q &=& \left\| \, |\nabla u|^q \right\|_1^{\frac{1}{q}} = \left\| \left( |\nabla u|^q n^{\frac{q}{p}} \right) n^{-\frac{q}{p}} \right\|_1^{\frac{1}{q}} \\
&\leq& \left( \left\| |\nabla u|^q n^{\frac{q}{p}} \right\|_{\frac{p}{q}} \left\| n^{-\frac{q}{p}} \right\|_{\frac{p}{p-q}}\right)^{\frac{1}{q}} = \left\| |\nabla u|^p n \right\|_1^{\frac{1}{p}} \left\| n^{-s} \right\|_1^{\frac{p-q}{pq}}
\end{eqnarray}
and thus
\begin{equation}\label{inequ-q-p,n}
\| \nabla u \|_q \leq c \| \nabla u \|_{p,n}.
\end{equation}

Now we can easily establish the inclusion, considering that $L^p(\Omega) \subset L^q(\Omega)$ for $\Omega$ bounded.
\end{proof}

The following Lemma is part of the Rellich-Kondrachov Theorem. \cite{adams}

\begin{lemma}\label{embedding2}
Let $\Omega$ be bounded then we have a compact embedding
\begin{equation}
W_0^{m,q}(\Omega) \hookrightarrow\hookrightarrow L^r(\Omega)
\end{equation}
for $1 \geq \frac{1}{r} > \frac{1}{q} - \frac{m}{d}$, provided $m \geq 1$ and $m q < d$.
\end{lemma}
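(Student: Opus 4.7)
The plan is to reduce the general case to the base $m=1$ and then establish compactness via the Fréchet-Kolmogorov translation-equicontinuity criterion. Since $mq<d$, the critical Sobolev exponent $q^* = dq/(d-mq)$ is well-defined and satisfies $1/q^* = 1/q - m/d$, so the hypothesis $1/r > 1/q - m/d$ is simply $r<q^*$; this strictness is precisely what purchases compactness.

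For the continuous piece of the embedding, I would iteratively apply the Gagliardo-Nirenberg-Sobolev inequality to obtain $W_0^{m,q}(\Omega) \hookrightarrow L^{q^*}(\Omega)$, and then use the boundedness of $\Omega$ together with Hölder to interpolate $L^{q^*}(\Omega) \hookrightarrow L^r(\Omega)$. For $m\geq 2$, the compactness step reduces to the case $m=1$ via the chain $W_0^{m,q}(\Omega) \hookrightarrow W_0^{1,q_1}(\Omega)$ with $1/q_1 = 1/q - (m-1)/d$, which still keeps $r$ strictly subcritical relative to $q_1$. For the core case, I would take a bounded sequence $\{u_k\}$ in $W_0^{1,q}(\Omega)$, extend each $u_k$ by zero outside $\Omega$ to obtain a bounded sequence in $W^{1,q}(\R^d)$ with uniformly compact support, and apply Fréchet-Kolmogorov in $L^r(\R^d)$: uniform $L^r$-boundedness follows from the continuous embedding, while translation equicontinuity comes from the standard estimate $\|u(\cdot+h)-u\|_q \leq |h|\,\|\nabla u\|_q$ (established by mollification and density of $\mathcal{C}_0^\infty$), combined with the trivial bound $\|u(\cdot+h)-u\|_{q^*} \leq 2\|u\|_{q^*}$ and Hölder interpolation between $L^q$ and $L^{q^*}$, which yields a uniform modulus $\|u_k(\cdot+h)-u_k\|_r \leq C\,|h|^\theta$ for some $\theta\in(0,1)$.

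The main obstacle is exactly this interpolation step: a routine Sobolev computation gives continuous embedding even at the critical exponent $r=q^*$, but compactness requires a quantitative modulus of continuity whose interpolation exponent $\theta$ is positive only because $r<q^*$ holds strictly. Once the modulus is in hand, Fréchet-Kolmogorov extracts a subsequence converging in $L^r(\R^d)$, and since all $u_k$ are supported in $\Omega$ this directly gives the claimed compact embedding into $L^r(\Omega)$.
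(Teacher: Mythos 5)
The paper offers no proof of this lemma: it is quoted as part of the Rellich--Kondrachov theorem with a reference to Adams and Fournier, so there is no in-paper argument to compare against, and your proposal should be judged as a self-contained proof. As such it is essentially the standard textbook argument and it is sound. The reduction of $m \geq 2$ to $m = 1$ via $W_0^{m,q}(\Omega) \hookrightarrow W_0^{1,q_1}(\Omega)$ with $1/q_1 = 1/q - (m-1)/d$ is legitimate because $(m-1)q < d$ follows from $mq < d$ and because $1/q_1 - 1/d = 1/q - m/d$, so $r$ stays strictly subcritical after the reduction; the translation estimate $\|u(\cdot + h) - u\|_q \leq |h|\,\|\nabla u\|_q$ is correct by density of $\mathcal{C}_0^\infty(\Omega)$ and Minkowski's integral inequality; and the uniformly compact support of the zero-extensions supplies the tail condition needed for Fr\'echet--Kolmogorov in $L^r(\R^d)$. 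You also correctly locate where strict subcriticality enters: the interpolation exponent $\theta$ degenerates to $0$ exactly at $r = q^*$. One small point to tidy up: the interpolation $\|f\|_r \leq \|f\|_q^{\theta}\|f\|_{q^*}^{1-\theta}$ only covers the range $q \leq r < q^*$, whereas the lemma permits any $r \geq 1$, in particular $1 \leq r < q$; in that range you should instead transfer the modulus of continuity from $L^q$ to $L^r$ directly by H\"older on the fixed bounded support, $\|f\|_r \leq c\,\|f\|_q$. With that addendum the argument is complete; for the application actually made in the paper ($q = 4/3$, $r = 2$, $d \geq 2$) your interpolation case is the relevant one anyway.
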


\begin{theorem}\label{embedding3}
Let $\Omega$ be bounded with dimension $\geq 2$ and the weighting function $n$ such that $n^{-2} \in L^1(\Omega)$ then we have a compact embedding
\begin{equation}
H_0^1(\Omega, n) \hookrightarrow\hookrightarrow L^2(\Omega).
\end{equation}
\end{theorem}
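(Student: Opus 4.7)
The plan is to chain the two preceding lemmas: first use Lemma~\ref{embedding1} to strip off the weight by continuously embedding $H_0^1(\Omega,n)$ into an ordinary (unweighted) Sobolev space $W_0^{1,q}(\Omega)$, and then apply Lemma~\ref{embedding2} (Rellich--Kondrachov) to compactly embed that Sobolev space into $L^2(\Omega)$. Since the composition of a continuous embedding with a compact one is compact, this chain immediately yields the desired compact embedding $H_0^1(\Omega,n) \hookrightarrow\hookrightarrow L^2(\Omega)$.

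For the first step I take $p=2$ in Lemma~\ref{embedding1} and choose $q$ so that the assumption $n^{-2}\in L^1(\Omega)$ matches the requirement $n^{-s}\in L^1(\Omega)$ with $s=q/(p-q)$. Setting $s=2$ forces $q=4/3$, and the lemma then delivers the continuous embedding $H_0^1(\Omega,n)\hookrightarrow W_0^{1,4/3}(\Omega)$.

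For the second step I invoke Lemma~\ref{embedding2} with $m=1$, $q=4/3$, and target $r=2$. The hypothesis $mq<d$ reads $4/3<d$ and is satisfied whenever $d\geq 2$, as assumed. The Sobolev-type condition $1/r > 1/q - m/d$ becomes $1/2 > 3/4 - 1/d$, i.e.\ $d<4$, so Rellich--Kondrachov provides the compact embedding $W_0^{1,4/3}(\Omega)\hookrightarrow\hookrightarrow L^2(\Omega)$ for the physically relevant range $d\in\{2,3\}$. Composing with the embedding of the previous step closes the argument.

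The main obstacle is reconciling the two H\"older exponents that enter in opposite directions. The integrability condition on $n^{-1}$ forces the intermediate exponent $q$ to be small (here $q=4/3$), whereas Rellich--Kondrachov requires $q$ to be large enough relative to $d$ so that $W_0^{1,q}(\Omega)$ still embeds into $L^2(\Omega)$. Under the stated assumption $n^{-2}\in L^1(\Omega)$ the intersection of the two constraints closes exactly when $d<4$, which pins down the effective range of the theorem; the remainder of the proof is then routine bookkeeping.
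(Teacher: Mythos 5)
Your proof is correct and follows exactly the paper's route: Lemma~\ref{embedding1} with $p=2$, $q=4/3$, $s=2$ to pass to $W_0^{1,4/3}(\Omega)$, then Lemma~\ref{embedding2} with $m=1$, $r=2$ to land compactly in $L^2(\Omega)$. Your explicit check that the Rellich--Kondrachov condition pins the argument to $d<4$ is a worthwhile addition the paper leaves implicit, since the theorem's hypothesis ``dimension $\geq 2$'' reads more generally than the proof actually supports.
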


\begin{proof}
In the case of Lemma \ref{embedding1} choose $p = 2, q = \frac{4}{3}$ which leads to $s = 2$. Now we can choose $r=2$ and use Lemma \ref{embedding2} to get the following sequence of embeddings.
\begin{equation}
H_0^1(\Omega, n) = W_0^{1,2}(\Omega, n) \hookrightarrow W_0^{1,\frac{4}{3}}(\Omega) \hookrightarrow\hookrightarrow L^2(\Omega)
\end{equation}
\end{proof}

\section{Weak solutions to the Sturm-Liouville problem}

\begin{theorem}\label{sl-solutions}
Let $\Omega$ be bounded with dimension $\geq 2$ and the weighting function $n$ such that $n^{-2} \in L^1(\Omega)$. The bilinear form $Q$ defined on the Hilbert space $H_0^1(\Omega,n)$ by (\ref{Q-def}) is continuous as well as coercive and therefore admits the use of Theorem \ref{lax-milgram}.
\end{theorem}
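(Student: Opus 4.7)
The plan is to verify both hypotheses of Lax--Milgram (Theorem \ref{lax-milgram}) for $Q$ on $\H = H_0^1(\Omega,n)$. Continuity comes almost for free: writing $Q(u,v) = \langle \sqrt{n}\,\nabla u,\sqrt{n}\,\nabla v\rangle$ in the ordinary $L^2$ sense and applying Cauchy--Schwarz yields $|Q(u,v)|\le \|\nabla u\|_{2,n}\|\nabla v\|_{2,n}\le \|u\|_{1,2,n}\|v\|_{1,2,n}$, so one can simply take $C=1$. The real content of the theorem is coercivity.

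Since $Q(u,u) = \|\nabla u\|_{2,n}^2$ whereas $\|u\|_{1,2,n}^2 = \|u\|_2^2 + \|\nabla u\|_{2,n}^2$, coercivity is equivalent to a weighted Poincar\'e-type estimate
\begin{equation*}
\|u\|_2 \le K\,\|\nabla u\|_{2,n} \qquad \text{for all } u\in H_0^1(\Omega,n),
\end{equation*}
from which the coercivity constant can be read off as $c = 1/(1+K^2)$. My entire effort will go into establishing this inequality.

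I would carry out this estimate by the familiar Rellich--Kondrachov contradiction scheme. Suppose no such $K$ existed; then one can extract a sequence $u_k$ with $\|u_k\|_2 = 1$ and $\|\nabla u_k\|_{2,n}\to 0$, which is bounded in $H_0^1(\Omega,n)$. By the compact embedding of Theorem \ref{embedding3} a subsequence converges strongly in $L^2(\Omega)$ to some limit $u$ with $\|u\|_2=1$, so a contradiction is reached once $u = 0$ is identified. Here the twin embedding of Lemma \ref{embedding1} (with $p=2$, $q=4/3$, $s=2$, which is exactly the hypothesis $n^{-2}\in L^1(\Omega)$) comes into play: it delivers $\|\nabla u_k\|_{4/3}\le c\|\nabla u_k\|_{2,n}\to 0$, and since $\Omega$ is bounded the strong $L^2$ convergence upgrades automatically to $u_k\to u$ in $L^{4/3}(\Omega)$ as well. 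Hence $u_k\to u$ in $W^{1,4/3}(\Omega)$ with limit gradient zero, and closedness of $W_0^{1,4/3}(\Omega)$ in this norm places $u$ in that space; the classical Poincar\'e inequality on $W_0^{1,4/3}(\Omega)$ then forces $u=0$, contradicting $\|u\|_2 = 1$.

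The hard part is precisely this last identification. The compact embedding into $L^2$ by itself yields no information about the derivative of the limit, so it has to be paired with the unweighted continuous embedding of Lemma \ref{embedding1}; only this combination allows the vanishing of the weighted gradient to be carried through to the vanishing of an ordinary $L^{4/3}$ gradient inside a concrete unweighted Sobolev space where the standard Poincar\'e inequality applies. Everything else in the argument is book-keeping --- the nontrivial structural input is the interlocking of the two embeddings supplied by Theorem \ref{embedding3} and Lemma \ref{embedding1}.
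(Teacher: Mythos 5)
Your proof is correct, and the continuity half coincides with the paper's (Cauchy--Schwarz after factoring $\sqrt{n}$, with $C=1$). For coercivity you take a genuinely different route: you reduce, exactly as the paper does, to the single inequality $\|u\|_2 \le K\|\nabla u\|_{2,n}$, but you then prove it by a Rellich--Kondrachov contradiction argument, whereas the paper proves it by a direct chain of estimates: Lemma \ref{embedding1} with $p=2$, $q=\tfrac{4}{3}$ gives $\|\nabla u\|_{4/3}\le c_1\|\nabla u\|_{2,n}$, the Poincar\'e inequality on $W_0^{1,4/3}(\Omega)$ gives $\|u\|_{4/3}\le c_2\|\nabla u\|_{2,n}$, combining yields $\|u\|_{1,4/3}\le c_3\|\nabla u\|_{2,n}$, and the \emph{continuous} embedding $W_0^{1,4/3}(\Omega)\hookrightarrow L^2(\Omega)$ from Lemma \ref{embedding2} then delivers $\|u\|_2\le c_5\|\nabla u\|_{2,n}$ with an explicit constant. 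The ingredients are the same (the $p=2$, $q=\tfrac{4}{3}$, $s=2$ instance of Lemma \ref{embedding1}, Poincar\'e on the unweighted space, and the Sobolev embedding into $L^2$), but the paper never needs compactness --- only the norm inequality behind Lemma \ref{embedding2} --- and its constants are in principle traceable, while your argument additionally invokes the full compact embedding of Theorem \ref{embedding3} plus the closedness of $W_0^{1,4/3}(\Omega)$, and produces a non-constructive constant $K$. Your limit identification (strong $L^{4/3}$ convergence of $u_k$ together with $\nabla u_k\to 0$ in $L^{4/3}$ forces the limit into $W_0^{1,4/3}(\Omega)$ with vanishing gradient, hence $u=0$ by Poincar\'e) is handled correctly, so the argument is complete; it is simply a heavier tool for a statement the embedding lemmas already give directly.
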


\begin{proof} \textit{Part 1: Continuity.}
\begin{eqnarray}
|Q(u,v)| &=& \left|\langle \nabla u,n \nabla v \rangle\right| 
= \left|\left\langle \sqrt{n}\, \nabla u,\sqrt{n}\, \nabla v \right\rangle\right| \\
&\leq& \left\| \sqrt{n}\, \nabla u \right\|_2 \cdot \left\| \sqrt{n}\, \nabla v \right\|_2 
= \left\| \nabla u \right\|_{2,n} \cdot \left\| \nabla v \right\|_{2,n} \\
&\leq& \left\| u \right\|_{1,2,n} \cdot \left\| v \right\|_{1,2,n}
\end{eqnarray}

\textit{Part 2: Coercivity.} We need to show $Q(u,u) \geq c \| u \|^2_{1,2,n}$ for all $u \in H_0^1(\Omega,n)$. We start by observing $Q(u,u) = \| \nabla u \|_{2,n}^2$. First we use the result (\ref{inequ-q-p,n}) in the proof of Lemma \ref{embedding1} for the case $p=2, q=\frac{4}{3}$.
\begin{equation}\label{coercivity-res1}
\| \nabla u \|_{\frac{4}{3}} \leq c_1 \| \nabla u \|_{2,n}
\end{equation}
As by Lemma \ref{embedding1} all elements of $H_0^1(\Omega,n)$ are also in $W_0^{1,q}(\Omega)$ we can apply the Poincaré inequality (cf. \cite{adams} 6.30) $\|u\|_q \leq c \|\nabla u\|_q$ to the l.h.s. It immediately follows
\begin{equation}\label{coercivity-res2}
\| u \|_{\frac{4}{3}} \leq c_2 \| \nabla u \|_{2,n}.
\end{equation}
If we combine results (\ref{coercivity-res1}) and (\ref{coercivity-res2}) after taking $(\cdot)^{\frac{4}{3}}$ we get
\begin{equation}
\| u \|_{\frac{4}{3}}^{\frac{4}{3}} +  \| \nabla u \|_{\frac{4}{3}}^{\frac{4}{3}} \leq \left( c_1^{\frac{4}{3}} + c_2^{\frac{4}{3}} \right) \| \nabla u \|_{2,n}^{\frac{4}{3}}
\end{equation}
which leads us straight to the norm of $W^{1,\frac{4}{3}}(\Omega)$ and
\begin{equation}
\|u\|_{1,\frac{4}{3}} \leq c_3 \| \nabla u \|_{2,n}.
\end{equation}
Now we make use of Lemma \ref{embedding2}, which tells us in the case $m = 1, q = \frac{4}{3}, r = 2$
\begin{equation}
\| u \|_2 \leq c_4 \| u \|_{1,\frac{4}{3}}
\end{equation}
and therefore the Hardy inequality \cite{opic-kufner-hardy}
\begin{equation}\label{coercivity-res3}
\| u \|_2 \leq c_5 \| \nabla u \|_{2,n}
\end{equation}
holds. It is now easy to arrive at the desired inequality by squaring (\ref{coercivity-res3}) and adding another $\| \nabla u \|_{2,n}^2$.
\begin{equation}
\| u \|_{1,2,n}^2 \leq \left( c_5^2 + 1 \right) \| \nabla u \|_{2,n}^2
\end{equation}
\end{proof}

Closely related we have the following Theorem as a solution to the general eigenvalue problem for $Q$.

\begin{theorem}
Given a continuous and coercive bilinear form $Q$ on $H_0^1(\Omega,n)$ under the conditions of Theorem \ref{sl-solutions} there is a monotone increasing sequence $(\lambda_m)_{m \in \mathbb{N}}$ of eigenvalues,
\begin{equation}
	0 < \lambda_1 \leq \lambda_2 \leq \lambda_m \stackrel{m \rightarrow \infty}{\longrightarrow} \infty,
\end{equation}
and an orthonormal basis $\{e_m\}_{m \in \mathbb{N}} \subset H_0^1(\Omega,n)$ of $L^2(\Omega)$ such that for all $u \in H_0^1(\Omega,n)$ and all $m \in \mathbb{N}$
\begin{equation}
	Q(u,e_m) = \lambda_m \langle u,e_m \rangle.
\end{equation}
\end{theorem}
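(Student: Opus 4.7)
The plan is to reduce the statement to the classical Hilbert-Schmidt spectral theorem for compact self-adjoint positive operators on $L^2(\Omega)$ by constructing the ``inverse'' of $Q$. First I would apply Theorem \ref{lax-milgram}: for every $f \in L^2(\Omega)$ the map $u \mapsto \langle u, f \rangle$ is a continuous linear functional on $H_0^1(\Omega,n)$ because $\|u\|_2 \leq \|u\|_{1,2,n}$, so there is a unique $Tf \in H_0^1(\Omega,n)$ with $Q(u, Tf) = \langle u, f \rangle$ for every $u \in H_0^1(\Omega,n)$. This defines a bounded linear operator $T \colon L^2(\Omega) \to H_0^1(\Omega,n)$. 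Composing with the compact embedding $\iota \colon H_0^1(\Omega,n) \hookrightarrow\hookrightarrow L^2(\Omega)$ from Theorem \ref{embedding3} gives a compact operator $K = \iota \circ T \colon L^2(\Omega) \to L^2(\Omega)$.

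Next I would verify the two structural properties of $K$. Symmetry of $Q$ (which is evident from the form $\langle \nabla u, n\nabla v \rangle$) yields self-adjointness, since for any $f,g \in L^2(\Omega)$
\begin{equation}
\langle Kf, g \rangle = Q(Kf, Kg) = Q(Kg, Kf) = \langle Kg, f \rangle.
\end{equation}
Coercivity of $Q$ then gives positivity, $\langle Kf, f \rangle = Q(Kf, Kf) \geq c\,\|Kf\|_{1,2,n}^2 \geq 0$. The only delicate point is the triviality of $\ker K$: if $Kf = 0$ then $Tf = 0$ in $H_0^1(\Omega,n)$, so $\langle u, f \rangle = 0$ for every $u \in H_0^1(\Omega,n)$; since $\mathcal{C}_0^\infty(\Omega) \subset H_0^1(\Omega,n)$ is dense in $L^2(\Omega)$, this forces $f = 0$. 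Hence $K$ is an injective compact self-adjoint positive operator on the separable Hilbert space $L^2(\Omega)$.

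Now I would invoke the Hilbert-Schmidt theorem to obtain an orthonormal basis $\{e_m\}_{m\in\mathbb{N}}$ of $L^2(\Omega)$ consisting of eigenvectors $K e_m = \mu_m e_m$ with strictly positive eigenvalues $\mu_m \to 0$, ordered so that $\mu_1 \geq \mu_2 \geq \cdots > 0$. Setting $\lambda_m = \mu_m^{-1}$ gives the monotone sequence $0 < \lambda_1 \leq \lambda_2 \leq \cdots \to \infty$ claimed in the statement. The identity $e_m = \lambda_m T e_m$ shows that each $e_m$ lies in $H_0^1(\Omega,n)$, and for every $u \in H_0^1(\Omega,n)$ the defining property of $T$ yields
\begin{equation}
Q(u, e_m) = \lambda_m\, Q(u, T e_m) = \lambda_m \langle u, e_m \rangle,
\end{equation}
which is the desired eigenvalue relation.

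The main obstacle I foresee is not in any single step but in the interplay between the two Hilbert space structures: the operator $K$ must be self-adjoint on $L^2(\Omega)$ (which gives a basis of $L^2$) while its eigenfunctions simultaneously belong to $H_0^1(\Omega,n)$ (so that $Q(u, e_m)$ is defined). The compact embedding from Theorem \ref{embedding3} is exactly what bridges these two structures, and the injectivity of $K$ via density of $\mathcal{C}_0^\infty(\Omega)$ is the one place where the argument is not entirely automatic from functional-analytic boilerplate.
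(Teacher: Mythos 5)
Your proposal is correct and is in substance the same argument the paper uses: the paper simply invokes the compact embedding $H_0^1(\Omega,n) \hookrightarrow\hookrightarrow L^2(\Omega)$ of Theorem \ref{embedding3} together with the abstract spectral result for coercive forms (Theorem 6.3.4 in \cite{blanchard-bruening}), and your construction of the solution operator $T$ via Lax--Milgram, its compactness, self-adjointness, positivity and injectivity, followed by the Hilbert--Schmidt theorem, is precisely the standard proof of that cited result. You have merely opened the black box the paper leaves closed.
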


\begin{proof}
By Theorem \ref{embedding3} we have a compact embedding $H_0^1(\Omega, n) \hookrightarrow\hookrightarrow L^2(\Omega)$ and this makes Theorem 6.3.4 in \cite{blanchard-bruening} applicable which yields just the given proposition.
\end{proof}

\section{Conclusions and Implications}

We conclude that it is possible to find a unique solution $v_t \in H_0^1(\Omega,n_t)$ to (\ref{sl-general}) under the restrictions given in Theorem \ref{sl-solutions} and $\zeta_t \in H^{-1}(\Omega,n_t)$ for all $t \in [0,T]$. Note that we made the time dependence of the involved sizes and thereby of the whole Hilbert space explicit by adding an index $t$. Even the space-domain $\Omega$ can be made time-dependent in this fashion. Further we have shown the existence of an eigenbasis of $L^2(\Omega)$ of the bilinear form $Q$ from above with eigenvalues $\lambda_m > 0$. \\

With respect to previous considerations \cite{tddft1}, where the linearization scheme of (\ref{sl-nonlinear}) poses restrictions only on the initial density $n_0$, we have extended the set of allowed densities. Especially we have lifted the restriction $\varepsilon \leq n_0 \leq M$ with $M \geq \varepsilon>0$ for $v$-representable densities. However, in \cite{tddft2} one uses a linearization of (\ref{sl-nonlinear}) on the whole time interval. Hence we conclude that 
\begin{equation}
\{ (n_t)_{t \in [0,T]} \,|\, n_t \geq 0, n_t \in L^1(\Omega), n_t^{-2} \in L^1(\Omega)\}
\end{equation}  
is a sufficiently constricted set of densities to guarantee the existence of solutions to the linearized equation (\ref{sl-general}) for the whole time interval. This is an important requirement to make the fixed point approach presented in \cite{tddft2} rigorous and also affects previous considerations in this matter, e.g.~\cite{van-leeuwen}.\\

Finally we want to discuss the condition $\zeta \in H^{-1}(\Omega,n)$. (We will drop the index $t$ again in these last considerations.) Going back to (\ref{sl-nonlinear}) the inhomogeneity $\zeta$ of the linearized equations \cite{van-leeuwen, tddft2} is noted to depend on $q[v] - \partial^2_t n$. So in \cite{van-leeuwen, tddft2} one will naturally assume that $q[v]$ as well as $\partial^2_t n$ are in $H^{-1}(\Omega,n)$. The latter can be seen as an additional condition for $v$-representable densities. By virtue of the continuity equation $\partial_t n = -\nabla \cdot \mathbf{j}$ \cite{tddft-book} we transform this condition, which reads as $\langle u, \partial_t^2 n \rangle < \infty$ for all $u \in H_0^1(\Omega,n)$, into a more elementary form.
\begin{equation}
\langle u, \partial_t^2 n \rangle = -\langle u, \nabla \cdot \partial_t \mathbf{j} \rangle = \langle \nabla u, \partial_t \mathbf{j} \rangle = \left\langle \sqrt{n} \nabla u, n^{-\frac{1}{2}} \partial_t \mathbf{j} \right\rangle
\end{equation}

The components of $\sqrt{n} \nabla u$ are all elements of $L^2(\Omega)$ because of $u \in H_0^1(\Omega, n)$, so the question remains if this is also true for $n^{-\frac{1}{2}} \partial_t \mathbf{j}$. This means we need to have a finite integral involving the force density $\partial_t \mathbf{j}$
\begin{equation}\label{finite-force-integral}
\int \frac{|\partial_t \mathbf{j}|^2}{n}\,\rmd x < \infty.
\end{equation}

Note that this very much resembles the so called Weizsäcker term from Thomas-Fermi theory which converges for finite kinetic energies. \cite{lieb}
\begin{equation}
\int |\nabla \sqrt{n}|^2\,\rmd x = \frac{1}{4} \int \frac{|\nabla n|^2}{n}\,\rmd x < \infty.
\end{equation}

Therefore (\ref{finite-force-integral}) might prove useful in relating our condition on $\partial_t^2 n$ to physical quantities as it is the case for the Weizsäcker term.

\section*{Acknowledgements}
M.R. gratefully acknowledges financial support by the Erwin Schr\"odinger Fellowship J 3016-N16 of the FWF (Austrian Science Fonds).

\section*{References}

\end{document}